\documentclass[10pt,twocolumn]{article}

\usepackage[utf8]{inputenc}
\usepackage{amsmath, amssymb, amsfonts, amsthm} 
\usepackage{titlesec}
\usepackage{cite}
\usepackage{booktabs}
\usepackage{graphicx}
\usepackage[colorlinks=false]{hyperref}
\usepackage[format=plain,labelfont=it]{caption}
\usepackage[left=1.5cm,right=1.5cm,top=2cm,bottom=2cm]{geometry}
\usepackage{color}
\usepackage{fancyhdr}
\usepackage{xurl}
\usepackage{svg}
\usepackage{nicefrac}
\usepackage{siunitx}
\usepackage{mathrsfs}
\usepackage{setspace}

\titleformat{\section}{\centering\normalfont\scshape}{\Roman{section}.}{5pt}{}
\titleformat{\subsection}{\normalfont\it}{\Alph{subsection}.}{5pt}{}
\titleformat{\subsubsection}{\normalfont\it}{\hspace{4mm}\arabic{subsubsection})}{5pt}{}

\newcommand\infoFootnote[1]{%
  \begingroup
  \renewcommand\thefootnote{}\footnote{#1}%
  \addtocounter{footnote}{-1}%
  \endgroup}

\newtheorem{thm}{Theorem}

\newtheorem{lem}[thm]{Lemma}
\newtheorem{prop}[thm]{Proposition}
\newtheorem{assum}{Assumption}

\newtheorem{defn}{Definition}

\newtheorem{rem}{Remark}

\DeclareMathOperator*{\argmin}{\arg\min}

\newcommand{\R}{\mathbb{R}} 

\newcommand{\ab}{\boldsymbol{a}}
\newcommand{\gb}{\boldsymbol{g}}
\newcommand{\pb}{\boldsymbol{p}}
\newcommand{\ub}{\boldsymbol{u}}
\newcommand{\vb}{\boldsymbol{v}}
\newcommand{\wb}{\boldsymbol{w}}
\newcommand{\xb}{\boldsymbol{x}}
\newcommand{\zb}{\boldsymbol{z}}
\newcommand{\zerob}{\boldsymbol{0}}
\newcommand{\oneb}{\boldsymbol{1}}

\newcommand{\Ab}{\boldsymbol{A}}
\newcommand{\Bb}{\boldsymbol{B}}

\newcommand{\Gb}{\boldsymbol{G}}
\newcommand{\Ib}{\boldsymbol{I}}
\newcommand{\Kb}{\boldsymbol{K}}
\newcommand{\Lb}{\boldsymbol{L}}
\newcommand{\Mb}{\boldsymbol{M}}
\newcommand{\Nb}{\boldsymbol{N}}
\newcommand{\Pb}{\boldsymbol{P}}
\newcommand{\Qb}{\boldsymbol{Q}}
\newcommand{\Rb}{\boldsymbol{R}}
\newcommand{\Sb}{\boldsymbol{S}}
\newcommand{\Ub}{\boldsymbol{U}}
\newcommand{\Vb}{\boldsymbol{V}}
\newcommand{\Xb}{\boldsymbol{X}}

\newcommand{\Pib}{\boldsymbol{\Pi}}

\newcommand{\Dbc}{\boldsymbol{\mathcal{D}}}

\newcommand{\Nc}{\mathcal{N}}
\newcommand{\Sc}{\mathcal{S}}
\newcommand{\Tc}{\mathscr{T}}
\newcommand{\Hc}{\mathcal{H}}

\newcommand{\halff}{{\nicefrac{1}{2}}}

\newcommand{\rank}[1]{\mathrm{rank}\left({#1}\right)}
\newcommand{\trace}[1]{\mathrm{tr}({#1})}
\newcommand{\froSq}[1]{\|{#1}\|_{\mathrm{F}}^2}
\newcommand{\fro}[1]{\|{#1}\|_{\mathrm{F}}}

\newcommand{\N}{\mathbb{N}}

\renewcommand{\boldsymbol}[1]{#1}
\renewcommand{\mathbf}[1]{\mathrm{#1}}

\newcommand{\Acl}{\Ab_{\mathrm{cl}}}
\newcommand{\tAcl}{\widetilde\Ab_{\mathrm{cl}}}
\newcommand{\tKb}{\widetilde\Kb}
\newcommand{\Als}{\Ab_{\mathrm{LS}}}
\newcommand{\Bls}{\Bb_{\mathrm{LS}}}
\newcommand{\Kls}{\Kb_{\mathrm{LS}}}

\def\tvdots{\vbox{\baselineskip=2pt \lineskiplimit=0pt \kern6pt \hbox{.}\hbox{.}\hbox{.}}}

\title{\vspace{-2mm}\bf On the Effect of Quadratic Regularization \\ in Direct Data-Driven LQR}
\author{Manuel Kl\"adtke, Feiran Zhao, Florian D\"orfler, Moritz Schulze Darup \vspace{2mm}}
\date{}

\begin{document}
\maketitle

\begin{abstract}

This paper proposes an explainability concept for direct data-driven linear quadratic regulation (LQR) with quadratic regularization. Our perspective follows the \textit{parametric effect of regularization}, an analysis approach that translates regularization costs from auxiliary variables to system quantities, enabling intuitive interpretations. The framework further enables the elimination of auxiliary variables, thereby  reducing computational complexity. We demonstrate the effectiveness of our approach and the identified effect of regularization via simulations.
\end{abstract}
\infoFootnote{M. Kl\"adtke and M. Schulze Darup are with the \href{https://rcs.mb.tu-dortmund.de/}{Control and~Cyberphysical Systems Group}, Faculty of Mechanical Engineering, TU Dortmund University, Germany. E-mails:  \href{mailto:manuel.klaedtke@tu-dortmund.de}{\{manuel.klaedtke, moritz.schulzedarup\}@tu-dortmund.de}. \vspace{0.5mm}}
\infoFootnote{F. Zhao and F. D\"orfler  are with the \href{https://ee.ethz.ch/}{Department of Information Technology and Electrical Engineering}, ETH Zürich, Switzerland. (e-mails: \{dorfler, zhaofe\}@control.ee.ethz.ch)}
\infoFootnote{This paper is a \textbf{preprint} of a contribution to the 23rd IFAC World Congress 2026. }

\section{Introduction}

The linear quadratic regulator (LQR) is the cornerstone of optimal control and a natural object of interests for recent advances in direct data-driven control methods. 
These methods compute controllers directly from measured trajectories. By proposing a data-based closed-loop system parameterization, the seminal works \cite{DePersis2020, DePersis2021} reformulate the LQR problem in a direct data-driven fashion without involving any system identification. Recently, \cite{Zhao2025_DeePO-LQR} have proposed a sample covariance parameterization of the LQR to make the formulation in \cite{DePersis2021} amenable to real-time computation. Both parameterizations have seen fruitful applications in different control problems \cite{mejari2025bias, liu2024learning}.

While performing well under persistently exciting inputs, these approaches are sensitive to process noise. To this end, quadratic regularizations have been introduced to enhance robust stability or certainty-equivalence with performance certificates \cite{Dörfler2022_roleOfRegLQR, Dörfler2023_certaintyEqLQR}. Moreover, simulation results have shown that, for the covariance parameterization, the regularization can improve both performance and stability significantly over the certainty-equivalence LQR \cite{ Zhao2025_regularization}. 

In this paper, we take a new perspective on the impact of quadratic regularizations for the LQR problem. Motivated by our recent work \cite{Klaedtke2025_AT} on explainable data-driven predictive control~(DPC) \cite{Coulson2019DeePC}, which develops analysis tools that enable exact and intuitive interpretations for the effects of regularization on the predicted trajectories, we introduce the concept of the \textit{parametric effect of regularization} (formalized in Def.~\ref{def:parametricEffect} below) to clarify how (quadratic) regularization influences direct data-driven LQR solutions. This concept translates regularization costs from auxiliary variables to interpretable system quantities, thereby enhancing transparency and enabling computational simplifications. Analytical results and numerical experiments reveal distinct roles of individual cost terms in shaping closed-loop dynamics and controller gains, contributing to the understanding and improvement of data-driven LQR.

The remainder of this paper is organized as follows. Section~\ref{sec:basics} reviews preliminaries on data-driven LQR. Section~\ref{sec:main} presents the main results on the concept of the parametric effect of several quadratic regularization. Section~\ref{sec:experiments} provides numerical experiments illustrating and validating the theoretical findings. We conclude our work in Section~\ref{sec:Conclusion}.

\section{Preliminaries}\label{sec:basics}

Consider a linear time-invariant (LTI) system 
\begin{equation}
    \xb(k+1)=\Ab\xb(k) + \Bb \ub(k) + \wb(k)
    \label{eq:LTI}
\end{equation}
with states $\xb\in\R^n$, inputs $\ub\in\R^m$, process noise $\wb\in\R^n$, and $(\Ab, \Bb)$  stabilizable. We refer to \eqref{eq:LTI} with $\wb(k) = \zerob$ for all $k \in \N$ as the noiseless case, in contrast to the noisy case, where $\wb(k)$ may be nonzero. For the noisy case, no probabilistic assumptions are imposed on $\wb(k)$. To prepare our analysis in Section~\ref{sec:main}, we briefly review model-based and data-driven LQR.

\subsection{Model-based LQR} \label{eq:modelLQR}

Consider a performance signal
$$
    \zb(k) = 
    \begin{pmatrix}
        \Qb^\halff & \zerob \\
        \zerob & \Rb^\halff
    \end{pmatrix}
    \begin{pmatrix}
        \xb(k) \\
        \ub(k)
    \end{pmatrix}
$$
for the LTI system \eqref{eq:LTI}, where $\Qb$ and $\Rb$ are positive definite weight matrices. One way to phrase the LQR problem is designing a linear state-feedback $u(k) = \Kb^\ast \xb(k)$, such that $\Ab+\Bb \Kb^\ast$ is Schur stable and the $\Hc_2$-norm $\|\Tc\|_2$ of the transfer function $\Tc: \wb \to \zb$ of the closed-loop system
$$
    \begin{pmatrix}
        \xb(k+1) \\
        \zb(k)
    \end{pmatrix} = 
    \left(
    \begin{array}{c|c}
       \Ab+\Bb\Kb^\ast  & \Ib \\ \hline
        \begin{pmatrix}
            \Qb^\halff \\
            \Rb^\halff \Kb
        \end{pmatrix} & \zerob
    \end{array}
    \right)
   \begin{pmatrix}
       \xb(k) \\
       \wb(k)
   \end{pmatrix}
$$
is minimized. When $\Ab+\Bb\Kb^\ast$ is Schur stable, the corresponding squared $\Hc_2$-norm can be computed \cite{chen1995_sampledData} as
$$
    \|\Tc\|_2^2 = \trace{\Qb\Pb^\ast} + \trace{\Kb^{\ast, \top} \Rb \Kb^\ast \Pb^\ast},
$$
where $\Pb^\ast$ is the controllability Gramian of the closed-loop system, which coincides with the unique solution to the Lyapunov equation
$$
    (\Ab+\Bb\Kb^\ast)\Pb^\ast(\Ab+\Bb\Kb^\ast)^\top - \Pb^\ast + \Ib = \zerob.
$$
The optimal gain $\Kb^\ast$ is also unique and the pair $(\Pb^\ast, \Kb^\ast)$ can be obtained by solving the optimization problem  
\begin{align}
    &\min_{\Pb
    \succeq \Ib, \Kb} \trace{\Qb\Pb} + \trace{\Kb^\top \Rb \Kb \Pb} \label{eq:lqrOptModelBased}\\
    &\text{s.t.} \quad (\Ab+\Bb\Kb)\Pb(\Ab+\Bb\Kb)^\top - \Pb + \Ib 
    \preceq \zerob, \nonumber
\end{align}
which admits a convex semi-definite program (SDP) parameterization \cite{DePersis2021}.

\subsection{Data-driven closed-loop parameterizations} \label{sec:ddLQR}

If the system parameters $(\Ab, \Bb)$ are not available, a data-driven representation of the closed-loop $\Ab + \Bb \Kb$ may be obtained from raw data \cite{DePersis2020}. Consider the data matrices 
    $\Xb_0 := \begin{pmatrix}
    \xb_0^{(1)} & \hdots & \xb_0^{(\ell)}
\end{pmatrix}  \in \R^{n \times \ell}$, 
$\Ub_0 := \begin{pmatrix}
    \ub^{(1)} & \hdots & \ub^{(\ell)} 
\end{pmatrix}  \in \R^{m \times \ell} $, and
$\Xb_1 := \begin{pmatrix}
    \xb^{(1)} & \hdots & \xb^{(\ell)} 
\end{pmatrix}  \in \R^{n \times \ell}$, 
where each triple $(\xb_0^{(i)}, \ub^{(i)}, \xb^{(i)})$ was generated by the noiseless LTI system \eqref{eq:LTI}, i.e., we have 
\begin{equation}
    \Xb_1 = \Ab \Xb_0 + \Bb \Ub_0, \label{eq:dataLTI}
\end{equation}
For easier notation, we introduce the stacked data matrices $\Dbc_0:= \begin{pmatrix}
    \Xb_0^\top & \Ub_0^\top
\end{pmatrix}^\top$ and $\Dbc:= \begin{pmatrix}
    \Dbc_0^\top & \Xb_1^\top
\end{pmatrix}^\top$. Further, we assume that the data is persistently exciting in the sense that the input-state data matrix $\Dbc_0$ has full row-rank 
\begin{equation}
    \rank{\Dbc_0} = n+m. \label{eq:GPE}
\end{equation} 
If \eqref{eq:GPE} is satisfied, there exists a matrix 
$\Gb\in\R^{\ell \times n}$ such that 
$$
    \begin{pmatrix}
        \Ib \\
        \Kb
    \end{pmatrix} = \begin{pmatrix}
        \Xb_0 \\
        \Ub_0
    \end{pmatrix} \Gb.
$$
Therefore, right-multiplying \eqref{eq:dataLTI} with $\Gb$ yields 
$$
    \Xb_1 \Gb = \Ab \Xb_0 \Gb + \Bb \Ub_0 \Gb = \Ab + \Bb \Kb.
$$%
Alternatively, \cite{Zhao2025_DeePO-LQR} introduce a data-driven covariance parameterization of the closed-loop. Similarly, if \eqref{eq:GPE} holds, one can find a matrix $\Vb\in \R^{n \times n}$ such that 
$$
    \begin{pmatrix}
        \Ib \\
        \Kb
    \end{pmatrix} = \begin{pmatrix}
        \frac{1}{\ell}\Xb_0 \Dbc_0^\top \\
        \frac{1}{\ell}\Ub_0 \Dbc_0^\top
    \end{pmatrix} \Vb,
$$
and multiply \eqref{eq:dataLTI} with $\ell^{-1}\Dbc_0^\top\Vb$, which yields
$$
    \frac{1}{\ell}\Xb_1 \Dbc_0^\top\Vb = \Ab \cdot \frac{1}{\ell}\Xb_0 \Dbc_0^\top\Vb + \Bb \cdot \frac{1}{\ell} \Ub_0 \Dbc_0^\top\Vb = \Ab + \Bb \Kb.
$$
These parameterizations may be used to formulate a data-driven version of LQR, which is the focus of this work.

\section{Quadratic regularization in data-driven LQR}\label{sec:main}

In this section, We first introduce the data-driven LQR formulation and explain why it fails in a noisy setting without regularization. Subsequently, we introduce a new analysis tool to explain the effect of regularizations in data-driven LQR. Furthermore, we show how these results may also be used to eliminate auxiliary variables and hence reduce computational complexity.

\subsection{Necessity of regularization in data-driven LQR}

The considerations in the previous section lead to the two data-driven representations
\begin{equation}
    \begin{pmatrix}
        \Ib \\
        \Kb\\
        \Acl
    \end{pmatrix} =
    \begin{pmatrix}
        \Xb_0 \\
        \Ub_0 \\        
        \Xb_1
    \end{pmatrix}\Gb, \label{eq:ddPredCL}
\end{equation}
and
\begin{equation}
    \begin{pmatrix}
        \Ib \\
        \Kb\\
        \Acl
    \end{pmatrix} =
    \begin{pmatrix}
        \frac{1}{\ell}\Xb_0 \Dbc_0^\top\\
        \frac{1}{\ell}\Ub_0 \Dbc_0^\top\\        
        \frac{1}{\ell}\Xb_1 \Dbc_0^\top
    \end{pmatrix}\Vb \label{eq:ddPredCLcovar}
\end{equation}
of the closed-loop system. Here, we introduced the variable $\Acl$ to highlight that, while we have $\Acl = \Ab+\Bb\Kb$ in the noiseless setting, this need not be the case in noisy settings. Using \eqref{eq:ddPredCL}, a data-driven version of \eqref{eq:lqrOptModelBased} can be stated as
\begin{align}
    &\min_{\Pb \succeq \Ib, \Kb, \Acl, \Gb} \trace{\Qb\Pb} + \trace{\Kb^\top \Rb \Kb \Pb} + H(\Gb, \Pb) \label{eq:lqrOptDD}\\
    &\qquad \text{s.t.} \;\; \Acl\Pb\Acl^\top - \Pb + \Ib \preceq \zerob, \nonumber 
     \quad \,(\Kb, \Acl, \Gb) \; \text{satisfy} \; \eqref{eq:ddPredCL}, \nonumber
\end{align}
and an analogous version follows for \eqref{eq:ddPredCLcovar}. We denote the optimizers of \eqref{eq:lqrOptDD} as $\overline\Kb, \overline\Ab_\text{cl}, \overline{\Pb}$ to distinguish them from the true model-based LQR solution $\Kb^\ast, \Pb^\ast$. In \eqref{eq:lqrOptDD}, $H(\Gb, \Pb)$ is a regularizer used to robustify the solution in the noisy data case. In the noiseless setting with data satisfying \eqref{eq:GPE}, the data matrices always satisfy 
$\rank{\begin{matrix}
    \Dbc
\end{matrix}} = \rank{\begin{matrix}
    \Dbc_0
\end{matrix}}, $
such that \eqref{eq:ddPredCL} implies a unique (and exact) mapping from $\Kb$ to $\Acl$. Hence, for $H(\Gb, \Pb) = \zerob$, the optimal pair $(\overline\Pb, \overline\Kb)$ obtained from \eqref{eq:lqrOptDD} coincides \cite{DePersis2021} with the true LQR solution $(\Pb^\ast, \Kb^\ast)$.  However, in more realistic, noisy settings, trying to obtain $(\Pb^\ast, \Kb^\ast)$ via \eqref{eq:lqrOptDD} with $H(\Gb, \Pb) = 0$ may fail spectacularly, which we specify in Proposition~\ref{prop:fail} below. In this noisy setting, the rank deficiency $\rank{\begin{matrix}
            \Dbc
        \end{matrix}} = \rank{\begin{matrix}
            \Dbc_0
        \end{matrix}}$ is usually lost, and instead there are $\rank{\begin{matrix}
            \Dbc
        \end{matrix}} - \rank{\begin{matrix}
            \Dbc_0
        \end{matrix}}> 0$ 
 degrees of freedom to choose $\Acl$ independently of $\Kb$, which are greedily exploited in \eqref{eq:lqrOptDD}. To reflect this crucial aspect in our analyses, we make the following assumption.
\begin{assum}\label{assum:fullRank}
    The data matrix $\Dbc$ has full row-rank.
\end{assum}
Note that Assumption~\ref{assum:fullRank} also requires \eqref{eq:GPE} to be satisfied. Further, it means that $\wb(k)$ has artificially excited all (even otherwise physically impossible) directions of the state-space. Crucially, results based on Assumption~\ref{assum:fullRank} extend even beyond noisy LTI systems \eqref{eq:LTI}. In fact, Assumption~\ref{assum:fullRank} reflects any deviations from the ideal deterministic LTI setting and may also arise from measurement noise, nonlinearities, time variance, etc.  For more details on Assumption~\ref{assum:fullRank} and discussions in the DPC setting, see \cite{Klaedtke2025_AT}. To establish the need for regularization $H(\Gb, \Pb) \neq 0$ in \eqref{eq:lqrOptDD}, we first consider the effects of choosing $H(\Gb, \Pb) = 0$.
\begin{prop}\label{prop:fail}
    Under Assumption~\ref{assum:fullRank} and with ${H(\Gb, \Pb) = 0}$, the optimal gain obtained from \eqref{eq:lqrOptDD} is given by $\overline\Kb =  \zerob.$
\end{prop}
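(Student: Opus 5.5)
Under Assumption~\ref{assum:fullRank} the constraint \eqref{eq:ddPredCL} no longer couples $\Kb$ and $\Acl$. Since $\Dbc \in \R^{(2n+m)\times \ell}$ has full row-rank, it has a right inverse $\Dbc^\dagger$. Hence for \emph{any} triple $(\Ib, \Kb, \Acl)$ we can pick
$$
\Gb = \Dbc^\dagger \begin{pmatrix} \Ib \\ \Kb \\ \Acl \end{pmatrix},
$$
and \eqref{eq:ddPredCL} holds. With $H(\Gb,\Pb)=0$, the variable $\Gb$ appears nowhere else, so \eqref{eq:lqrOptDD} is equivalent to
$$
\min_{\Pb\succeq \Ib,\,\Kb,\,\Acl} \trace{\Qb\Pb} + \trace{\Kb^\top\Rb\Kb\Pb}
\quad\text{s.t.}\quad \Acl\Pb\Acl^\top - \Pb + \Ib \preceq \zerob ,
$$
in which $\Kb$ and $\Acl$ are free and decoupled. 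This reduction is the main step.

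Next we bound the cost from below. Because $\Pb\succeq\Ib\succ 0$ and $\Rb\succ 0$,
$$
\trace{\Kb^\top\Rb\Kb\Pb} = \trace{\Pb^\halff\Kb^\top\Rb\Kb\Pb^\halff} \ge 0,
$$
with equality if and only if $\Rb^\halff\Kb\Pb^\halff=\zerob$, i.e.\ $\Kb=\zerob$. Likewise $\trace{\Qb\Pb}\ge\trace{\Qb}$ since $\Pb-\Ib\succeq 0$ and $\Qb\succ 0$, with equality if and only if $\Pb=\Ib$. The objective is therefore bounded below by $\trace{\Qb}$.

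This bound is attained by $\Pb=\Ib$, $\Kb=\zerob$, $\Acl=\zerob$. The constraint becomes $\zerob-\Ib+\Ib=\zerob\preceq\zerob$, which holds, and the corresponding $\Gb$ is obtained from the right-inverse construction above. So the optimal value is $\trace{\Qb}$.

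Finally, any optimizer must attain this value. By the equality conditions above, every optimizer has $\overline\Pb=\Ib$ and $\overline\Kb=\zerob$, so the optimal gain is unique and equal to $\zerob$. The only delicate point is to state explicitly that $\Gb$ is unconstrained once $\Dbc$ has full row-rank; everything after that is elementary.
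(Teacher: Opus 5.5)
Your proof is correct and follows the paper's argument: full row-rank of $\Dbc$ lets you drop $\Gb$ and \eqref{eq:ddPredCL}, after which $\Kb$ enters only through the nonnegative term $\trace{\Kb^\top \Rb \Kb \Pb}$, and this term vanishes only at $\Kb=\zerob$. Your explicit lower bound $\trace{\Qb}$, attained at $(\Pb,\Kb,\Acl)=(\Ib,\zerob,\zerob)$, is a small but useful addition: it settles existence of an optimizer and also gives $\overline\Pb=\Ib$, both of which the paper leaves implicit.
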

\begin{proof}
    Under Assumption~\ref{assum:fullRank}, there exists a $\Gb$ satisfying \eqref{eq:ddPredCL} for any l.h.s.; as $\Gb$ is absent from all other constraints or costs, we may disregard $\Gb$ and \eqref{eq:ddPredCL}. This gives 
    \begin{align*}
    \min_{\Pb \succeq \Ib, \Kb, \Acl} \! \trace{\Qb\Pb} + \trace{\Kb^\top \Rb \Kb \Pb} \; \text{s.t.} \; \Acl\Pb\Acl^\top - \Pb + \Ib \preceq \zerob, \nonumber
\end{align*}
    where $\Kb$ and $\Acl$ can be freely chosen. Clearly, the constraint is always feasible, e.g., with $\Acl=\zerob$. Furthermore, since $\Rb$ and $\Pb$ are positive definite, the term $\trace{\Kb^\top \Rb \Kb \Pb}$ is uniquely minimized by $\overline\Kb = \zerob$.
\end{proof}
An equivalent result is provided by \cite[Thm.~2]{Zeng2024} in a stochastic setting, where Assumption~\ref{assum:fullRank} holds with probability 1. Note that this effect does not occur when using the covariance parameterization \eqref{eq:ddPredCLcovar}.
\begin{lem}\label{lem:covarImplicitConstraint}
    If \eqref{eq:GPE} is satisfied, the covariance parametrization \eqref{eq:ddPredCLcovar} implicitly enforces $\Acl 
    = \Als + \Bls \Kb$, where
    \begin{equation}
\!\begin{pmatrix}
        \Als & \Bls
\end{pmatrix}:= \Mb_\text{LS} = \argmin_\Mb \froSq{\Xb_1\! - \Mb \Dbc_0} = \Xb_1 \Dbc_0^+\! \label{eq:AB_LS}
    \end{equation}
are least-squares estimates of $(\Ab, \Bb)$.
\end{lem}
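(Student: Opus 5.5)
Under \eqref{eq:GPE}, $\Dbc_0$ has full row rank, so $\Dbc_0\Dbc_0^\top$ is invertible. Hence $\Dbc_0^+ = \Dbc_0^\top(\Dbc_0\Dbc_0^\top)^{-1}$, and the least-squares problem in \eqref{eq:AB_LS} has the unique minimizer $\Mb_\text{LS} = \Xb_1\Dbc_0^\top(\Dbc_0\Dbc_0^\top)^{-1}$. This follows from the normal equations $\Mb\Dbc_0\Dbc_0^\top = \Xb_1\Dbc_0^\top$, because the objective is strictly convex in $\Mb$ once $\Dbc_0\Dbc_0^\top \succ \zerob$.

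The key step is to read off the constraint implied by \eqref{eq:ddPredCLcovar}. The first two block rows of \eqref{eq:ddPredCLcovar} read
$$
\begin{pmatrix}\Ib\\ \Kb\end{pmatrix} = \frac{1}{\ell}\Dbc_0\Dbc_0^\top \Vb .
$$
Since $\frac{1}{\ell}\Dbc_0\Dbc_0^\top$ is square and invertible, $\Vb$ is uniquely determined by $\Kb$:
$$
\Vb = \ell\,(\Dbc_0\Dbc_0^\top)^{-1}\begin{pmatrix}\Ib\\ \Kb\end{pmatrix}.
$$
Substituting this into the third block row gives
$$
\Acl = \frac{1}{\ell}\Xb_1\Dbc_0^\top\Vb = \Xb_1\Dbc_0^\top(\Dbc_0\Dbc_0^\top)^{-1}\begin{pmatrix}\Ib\\ \Kb\end{pmatrix} = \Mb_\text{LS}\begin{pmatrix}\Ib\\ \Kb\end{pmatrix} = \Als + \Bls\Kb .
$$

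Conversely, for any $\Kb$ the choice of $\Vb$ above satisfies the first two block rows, so $\Kb$ remains free. Thus \eqref{eq:ddPredCLcovar} is equivalent to the single relation $\Acl = \Als + \Bls\Kb$ with $\Kb$ arbitrary, which is the implicit constraint claimed. Unlike Proposition~\ref{prop:fail}, Assumption~\ref{assum:fullRank} plays no role here: the number of free parameters in $\Vb$ ($(n+m)\times n$) equals the number of rows fixed by $(\Ib;\Kb)$, so no degrees of freedom are left over to choose $\Acl$ independently of $\Kb$.

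There is no serious obstacle. The only points needing care are the invertibility of $\Dbc_0\Dbc_0^\top$ under \eqref{eq:GPE} and identifying $\Xb_1\Dbc_0^+$ with the explicit pseudoinverse formula, both of which are standard for full row-rank matrices.
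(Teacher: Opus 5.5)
Your proposal is correct and is essentially the paper's proof. Under \eqref{eq:GPE}, $\hat\Sigma_{\Dbc_0} = \ell^{-1}\Dbc_0\Dbc_0^\top$ is invertible, so the first two block rows fix $\Vb = \hat\Sigma_{\Dbc_0}^{-1}\begin{pmatrix}\Ib & \Kb^\top\end{pmatrix}^\top$, and substituting this into the third block row gives $\Acl = \Xb_1\Dbc_0^\top(\Dbc_0\Dbc_0^\top)^{-1}\begin{pmatrix}\Ib & \Kb^\top\end{pmatrix}^\top = \Als + \Bls\Kb$. You additionally spell out two points the paper leaves implicit: the identification $\Xb_1\Dbc_0^+ = \Xb_1\Dbc_0^\top(\Dbc_0\Dbc_0^\top)^{-1}$, and the converse that $\Kb$ remains free.
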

\begin{proof}
    Part of \eqref{eq:ddPredCLcovar} involves the empirical state-input covariance matrix $\hat\Sigma_{\Dbc_0} := \ell^{-1}\Dbc_0\Dbc_0^\top$, which is non-singular when \eqref{eq:GPE} is satisfied. Thus, we have $\Vb = 
    \hat\Sigma_{\Dbc_0}^{-1}\begin{pmatrix}
        \Ib & \Kb^\top
    \end{pmatrix}^\top$, which implies
    $\Acl = \frac{1}{\ell}\Xb_1 \Dbc_0^\top \Vb
    = \Als + \Bls \Kb$.
\end{proof}
As a consequence of Lemma~\ref{lem:covarImplicitConstraint}, data-driven LQR with covariance parametrization \eqref{eq:ddPredCLcovar} and $H(\Vb, \Pb) = 0$ coincides with a certainty-equivalent LQR \cite{Dörfler2023_certaintyEqLQR}, which was also shown in \cite{Zhao2025_DeePO-LQR}. The benefits of different regularizations $H(\Gb, \Pb)$ or $H(\Vb, \Pb)$ regarding robustness for the noisy setting have been investigated in previous works \cite{DePersis2021, Dörfler2022_roleOfRegLQR, Zhao2025_regularization}. In this paper, we further enhance the theoretical understanding of these regularizations via the analysis tool proposed in the next section.

\subsection{Parametric effects of quadratic regularizations}\label{sec:parametricEffect}

In this section, we analyze several choices of quadratic regularization for data-driven LQR proposed in \cite{DePersis2021, Dörfler2023_certaintyEqLQR, Zhao2025_regularization}. The key tool is analogous to the trajectory-specific effect in DPC \cite[Def.~1]{Klaedtke2025_AT}. However, since the l.h.s. of the parameterization \eqref{eq:ddPredCL} is not a trajectory triple, but rather involves the closed-loop system parameters $(\Kb, \Acl)$, we adapt the definition as follows.
\begin{defn}\label{def:parametricEffect}
    For a given $(\Kb, \Acl, \Pb)$, we call 
\begin{align}
    &H^\ast(\Kb, \Acl, \Pb) \! :=  \min_{\Gb} 
  H(\Gb, \Pb) \;  \text{s.t.} 
 \;\;  \Gb \;\; \text{satisfies} \;\eqref{eq:ddPredCL} \label{eq:OP_parametricEffect}
\end{align} 
the \textit{parametric effect} of $H(\Gb, \Pb)$ given $\Dbc$.
\end{defn}

Conceptually, \eqref{eq:OP_parametricEffect} is a multiparametric optimization problem with parameters $(\Kb, \Acl, \Pb)$, which naturally emerges as an inner optimization problem to \eqref{eq:lqrOptDD}. It assigns to every triple $(\Kb, \Acl, \Pb)$ the optimal regularization cost $H^\ast(\Kb, \Acl, \Pb)$, allowing intuitive interpretations through actual system quantities. Furthermore, under Assumption~\ref{assum:fullRank}, the variable $\Gb$ and constraint \eqref{eq:ddPredCL} can be eliminated after replacing $H(\Gb, \Pb)$ with $H^\ast(\Kb, \Acl, \Pb)$, leading to a lower-dimensional parameterization that does not scale with the number $\ell$ of data columns, which is shown in Section~\ref{sec:compBenefits}. Next, we analyze the parametric effect of the quadratic regularization 
\begin{equation}
    H(\Gb, \Pb) = \lambda  \, \trace{\Gb \Pb \Gb^\top} =\lambda\froSq{\Gb\Pb^\halff } \label{eq:quadraticReg}
\end{equation} introduced in \cite{DePersis2021} to promote robustness. 
\begin{prop}\label{prop:weightedQuadReg}
    Under Assumption~\ref{assum:fullRank}, the parametric effect of quadratic regularization \eqref{eq:quadraticReg} is given by
     \begin{align}
        H^\ast(\Kb, \Acl, \Pb) 
        =& \frac{\lambda}{\ell}\froSq{{\hat\Sigma_{\Delta\Xb}^{-\halff}} (\Acl-(\Als + \Bls \Kb))\Pb^\halff} \nonumber \\
        & \quad + \frac{\lambda}{\ell}\froSq{{\hat\Sigma_{\Delta\Ub}^{-\halff}} (\Kb-\Kls)\Pb^\halff} \nonumber\\
        & \quad + \frac{\lambda}{\ell}\froSq{{\hat\Sigma_{\Xb_0}^{-\halff}}\Pb^\halff} \label{eq:weightedQuadReg}
    \end{align}
    with $(\Als, \Bls)$ as in \eqref{eq:AB_LS}, and a least-squares controller
    \begin{equation}
        \Kb_\text{LS} := \argmin_\Kb \froSq{\Ub_0 - \Kb \Xb_0} = \Ub_0 \Xb_0^+. \label{eq:K_LS}
    \end{equation}%
    Further, $\hat\Sigma_{\Delta\Xb} := \ell^{-1}\Delta\Xb\Delta\Xb^\top$  and  $\hat\Sigma_{\Delta\Ub} := \ell^{-1}\Delta\Ub\Delta\Ub^\top$
    are empirical covariances of the residuals
    $\Delta\Xb := \Xb_1 - \Mb_\text{LS} \Dbc_0$ and 
    $\Delta\Ub := \Ub_0 - \Kb_\text{LS} \Xb_0$, and
    $\hat\Sigma_{\Xb_0}:=\ell^{-1} \Xb_0\Xb_0^\top$ is the empirical covariance matrix of the initial state $\xb_0$.
\end{prop}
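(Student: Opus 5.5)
The plan is to solve the inner problem \eqref{eq:OP_parametricEffect} in closed form as a weighted minimum-norm problem, and then to diagonalize the resulting quadratic form with a block Gram--Schmidt (block-LDL) factorization of $\Dbc\Dbc^\top$ along the row ordering $\Xb_0,\Ub_0,\Xb_1$.

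\textbf{Step 1: minimum-norm solution.} Write the l.h.s.\ of \eqref{eq:ddPredCL} as $\Yb := \begin{pmatrix}\Ib & \Kb^\top & \Acl^\top\end{pmatrix}^\top$, so the constraint reads $\Dbc\Gb=\Yb$. Since $\Pb\succeq\Ib$ is invertible, substitute $\widetilde\Gb := \Gb\Pb^\halff$. The problem then becomes minimizing $\lambda\froSq{\widetilde\Gb}$ subject to $\Dbc\widetilde\Gb = \Yb\Pb^\halff$. Under Assumption~\ref{assum:fullRank}, $\Dbc\Dbc^\top$ is invertible. The unique minimizer is therefore $\widetilde\Gb = \Dbc^\top(\Dbc\Dbc^\top)^{-1}\Yb\Pb^\halff$, with optimal value
\begin{equation*}
H^\ast = \lambda\,\trace{\Pb^\halff \Yb^\top(\Dbc\Dbc^\top)^{-1}\Yb\Pb^\halff}.
\end{equation*}

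\textbf{Step 2: block orthogonalization of the data.} Define the residuals $\Delta\Ub = \Ub_0-\Kls\Xb_0$ and $\Delta\Xb = \Xb_1 - \Mb_\text{LS}\Dbc_0$. By the normal equations of \eqref{eq:K_LS} and \eqref{eq:AB_LS}, $\Delta\Ub\Xb_0^\top=\zerob$ and $\Delta\Xb\Dbc_0^\top=\zerob$. Consequently $\Delta\Xb\Xb_0^\top=\zerob$ and $\Delta\Xb\Delta\Ub^\top=\zerob$, since $\Delta\Ub$ is a linear combination of rows of $\Dbc_0$. This gives the factorization $\Dbc = \Lb\,\Sb$, where $\Sb:=\begin{pmatrix}\Xb_0^\top & \Delta\Ub^\top & \Delta\Xb^\top\end{pmatrix}^\top$ has mutually orthogonal row blocks, and $\Lb$ is the unit lower block-triangular matrix with rows
\begin{equation*}
(\Ib,\zerob,\zerob),\quad (\Kls,\Ib,\zerob),\quad (\Als+\Bls\Kls,\ \Bls,\ \Ib).
\end{equation*}
The third block row uses $\Mb_\text{LS}\Dbc_0 = \Als\Xb_0+\Bls(\Kls\Xb_0+\Delta\Ub)$. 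It follows that $\Dbc\Dbc^\top = \Lb\,\mathrm{diag}(\Xb_0\Xb_0^\top,\Delta\Ub\Delta\Ub^\top,\Delta\Xb\Delta\Xb^\top)\Lb^\top$. Because $\Lb$ is invertible and $\Dbc$ has full row rank, each diagonal block is positive definite. This justifies the inverse square roots of $\hat\Sigma_{\Xb_0}$, $\hat\Sigma_{\Delta\Ub}$ and $\hat\Sigma_{\Delta\Xb}$.

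\textbf{Step 3: evaluate.} We get $\Yb^\top(\Dbc\Dbc^\top)^{-1}\Yb = (\Lb^{-1}\Yb)^\top\mathrm{diag}(\cdot)^{-1}(\Lb^{-1}\Yb)$. Forward substitution gives the three blocks of $\Lb^{-1}\Yb$:
\begin{equation*}
\Ib,\qquad \Kb-\Kls,\qquad \Acl-(\Als+\Bls\Kls)-\Bls(\Kb-\Kls) = \Acl-(\Als+\Bls\Kb).
\end{equation*}
The block-diagonal structure splits the trace into three weighted Frobenius norms. Writing each block as $(\cdot)\Xb\Xb^\top = \ell\,\hat\Sigma$ produces the factor $\lambda/\ell$, which yields \eqref{eq:weightedQuadReg}.

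I expect the main obstacle to be Step 2: getting the factor $\Lb$ exactly right and verifying the orthogonality relations, in particular that the third block of $\Lb^{-1}\Yb$ collapses to $\Acl-(\Als+\Bls\Kb)$. Everything else is routine once that factorization is in place.
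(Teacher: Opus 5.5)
Your proof is correct. It agrees with the paper up to the change of variables $\widetilde\Gb = \Gb\Pb^\halff$ and then takes a different route.

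\textbf{The paper's route.} It splits the problem into $n$ decoupled vector problems, one per column of $\widetilde\Gb$. It recognizes each as the trajectory-specific effect of $\lambda\|\ab\|_2^2$ with horizon $N=1$ from the explainable-DPC work \cite{Klaedtke2025_AT}, imports that closed-form expression, and reassembles the columns into Frobenius norms.

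\textbf{Your route.} You solve the matrix minimum-norm problem in one step and then carry out the computation the paper outsources. The block Gram--Schmidt factorization $\Dbc = \Lb\Sb$, taken in the row ordering $\Xb_0, \Ub_0, \Xb_1$, puts $(\Dbc\Dbc^\top)^{-1}$ into block-diagonal form. Forward substitution then yields exactly the three residual blocks $\Ib$, $\Kb-\Kls$ and $\Acl-(\Als+\Bls\Kb)$.

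\textbf{What each buys.} The paper's argument is shorter and deliberately highlights the structural link to DPC that motivates the whole framework. Yours buys three things:
\begin{itemize}
\item It is self-contained.
\item It explains why $\Kls$, $\Als+\Bls\Kb$ and the residual covariances appear. They are the Gram--Schmidt coefficients and, up to the factor $\ell$, the successive Schur complements of $\Dbc\Dbc^\top$.
\item It proves that $\hat\Sigma_{\Xb_0}$, $\hat\Sigma_{\Delta\Ub}$ and $\hat\Sigma_{\Delta\Xb}$ are positive definite under Assumption~\ref{assum:fullRank}. The statement needs this for its inverse square roots, but the paper uses it only implicitly.
\end{itemize}
In addition, the leading two blocks of your factorization are precisely the block form of $\hat\Sigma_{\Dbc_0}^{-1}$ that the paper asks the reader to verify in the proof of Proposition~\ref{prop:covarReg}. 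Your argument therefore covers that step as well.
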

\begin{proof}
    The change of variables $\widetilde \Gb :=  \Gb \Pb^\halff$ turns \eqref{eq:ddPredCL} into
    $$
        \begin{pmatrix}
        \Ib \\
        \Kb\\
        \Acl
    \end{pmatrix} =
    \begin{pmatrix}
        \Xb_0 \\
        \Ub_0 \\        
        \Xb_1
    \end{pmatrix}\Gb \quad \iff \begin{pmatrix}
        \Pb^\halff \\
        \Kb \Pb^\halff\\
        \Acl \Pb^\halff
    \end{pmatrix} =
    \begin{pmatrix}
        \Xb_0 \\
        \Ub_0 \\        
        \Xb_1
    \end{pmatrix}\widetilde\Gb.
    $$
    Next, we consider the columns of the involved matrices, i.e., 
    $\widetilde\Gb = \begin{pmatrix}
        \tilde\gb_1 & \hdots & \tilde\gb_n
    \end{pmatrix}$, 
    $ \Pb^\halff = \begin{pmatrix}
        \pb_1 & \hdots & \pb_n
    \end{pmatrix}$, 
    $\Kb \Pb^\halff = \begin{pmatrix}
        \Kb \pb_1 & \hdots & \Kb \pb_n
    \end{pmatrix}$, and
     $\Acl  \Pb^\halff = \begin{pmatrix}
        \Acl  \pb_1 & \hdots & \Acl  \pb_n
    \end{pmatrix}$
    to equivalently rewrite the optimization problem \eqref{eq:OP_parametricEffect} as
    \begin{align*}
        H^\ast(\Kb, \Acl, \Pb) =\min_{\tilde\gb_1, \hdots, \tilde\gb_n} & \sum_{i=1}^n \|\tilde\gb_i\|_2^2 \quad \mathrm{s.t.} \\
        \begin{pmatrix}
            \pb_i \\
            \Kb\pb_i \\
            \Acl\pb_i
        \end{pmatrix}&=\begin{pmatrix}
        \Xb_0 \\
        \Ub_0 \\        
        \Xb_1
    \end{pmatrix} \tilde\gb_i \quad \text{for}\; i \in \{1, \hdots, n\}.
    \end{align*}
    As columns $i\neq j$ do not interact in either the objective or constraints, the optimization decouples into $n$ problems 
    \begin{align*}
        H^\ast(\Kb, \Acl, \Pb) =\sum_{i=1}^n \min_{\tilde\gb_i}   \|\tilde\gb_i\|_2^2 \quad \mathrm{s.t.} \; \begin{pmatrix}
            \pb_i \\
            \Kb\pb_i \\
            \Acl\pb_i
        \end{pmatrix}=\begin{pmatrix}
        \Xb_0 \\
        \Ub_0 \\        
        \Xb_1
    \end{pmatrix} \tilde\gb_i.
    \end{align*}
    Each problem structurally matches the \textit{trajectory-specific effect of regularization} in explainable DPC \cite[Def.~1]{Klaedtke2025_AT}. Specifically, we may use the trajectory-specific effect $h^\ast(\xb_0, \ub, \xb)$ \cite[Eq.~(10)]{Klaedtke2025_AT} of quadratic regularization $h(\ab) = \lambda \|\ab\|_2^2$ for a state-space setting with prediction horizon $N=1$.
    Evaluating $h^\ast(\pb_i,
            \Kb\pb_i,
            \Acl\pb_i)$ yields
    \begin{align*}
        H^\ast(\Kb, \Acl, \Pb) =& \frac{\lambda}{\ell}\sum_{i=1}^n  \|\Acl\pb_i-\hat\xb_\mathrm{LS}(\pb_i, \Kb\pb_i)\|_{\hat\Sigma_{\Delta\Xb}^{-1}}^2 \\
        & \qquad\quad+ \|\Kb\pb_i - \hat\ub_\mathrm{LS}(\pb_i) \|_{\hat\Sigma_{\Delta\Ub}^{-1}}^2 + \|\pb_i\|_{\hat\Sigma_{\Xb_0}^{-1}}^2, 
    \end{align*}
    where $\hat\xb_\text{LS}(\xb_0, \ub) := \Als \xb_0 + \Bls \ub$  and 
     $\hat\ub_\text{LS}(\xb_0) := \Kb_\text{LS} \xb_0$.
    Further evaluating $\hat\xb_\mathrm{LS}(\pb_i, \Kb\pb_i)$ and $\hat\ub_\mathrm{LS}(\pb_i)$ leads to
    \begin{align*}
            H^\ast(\Kb, \Acl, \Pb) 
            =&\frac{\lambda}{\ell}\sum_{i=1}^n  \|\Acl\pb_i-\left(\Als\pb_i+\Bls\Kb\pb_i\right)\|_{\hat\Sigma_{\Delta\Xb}^{-1}}^2 \\
            &\qquad \quad  + \|\Kb\pb_i - \Kls \pb_i \|_{\hat\Sigma_{\Delta\Ub}^{-1}}^2 + \|\pb_i\|_{{\hat\Sigma_{\Xb_0}^{-1}}}^2,
    \end{align*}
    and translating the sum of squared 2-norms for individual columns back to a squared Frobenius norm yields \eqref{eq:weightedQuadReg}. 
\end{proof}
We can see that the synthesized $\Acl$ is pushed towards a closed-loop matrix $\Als + \Bls \Kb$ involving the least-squares estimates $\Als, \Bls$ and the synthesized gain $\Kb$. This push occurs more (less) harshly in directions in which the least-squares predictor $\hat\xb_\text{LS}(\xb_0, \ub) = \Als \xb_0+\Bls\ub$ is considered to be more (less) accurate based on the empirical covariance $\hat\Sigma_{\Delta\Xb}$. The same effect occurs for the synthesized gain $\Kb$ being pushed towards a least-squares estimate $\Kls$, again, weighted with the (inverse of the) corresponding covariance $\hat\Sigma_{\Delta\Ub}$. Note that $\Kls$ represents the best-explored controller rather than being optimized for performance. The deviation between the synthesized pair $(\Kb, \Acl)$ and their least-squares counterparts $(\Kls, \Als + \Bls \Kb)$ is not only weighted with the respective inverse covariances ${\hat\Sigma_{\Delta\Xb}^{-1}}, {\hat\Sigma_{\Delta\Ub}^{-1}}$ but also with the closed-loop controllability Gramian $\Pb$, which remains an optimization variable in \eqref{eq:lqrOptDD}. Therefore, the third cost term also includes an optimization variable of \eqref{eq:lqrOptDD}, marking a notable distinction from the trajectory-specific effect in DPC \cite[Eq.~(10)]{Klaedtke2025_AT}. Its effect can be made apparent by realizing that part of the objective function in \eqref{eq:lqrOptDD} reads
$
    \trace{\Qb\Pb} + \nicefrac{\lambda}{\ell}\,\trace{{\hat\Sigma_{\Xb_0}^{-1}}\Pb} = \trace{\widetilde\Qb \Pb}$ with $\widetilde\Qb := \Qb + \nicefrac{\lambda}{\ell}\,{\hat\Sigma_{\Xb_0}^{-1}}. 
$
Hence, for increasing $\nicefrac{\lambda}{\ell}$ this new weight matrix $\widetilde\Qb$ becomes increasingly dominated by $\hat\Sigma_{\Xb_0}^{-1}$, which penalizes well (poorly) explored directions of the state-space more (less) harshly. Finally, since $H(\Gb, \Pb) = \lambda\froSq{\Gb\Pb^\halff}$ and $H(\Gb, \Pb) = \lambda\fro{\Gb\Pb^\halff}$ share optimizers by monotonicity, the parametric effect of $H(\Gb, \Pb) = \lambda\fro{\Gb\Pb^\halff}$ can be obtained based on Proposition~\ref{prop:weightedQuadReg} by taking the square root. This applies to all the regularizations considered here. 

The next result considers the projection-based quadratic regularization $H(\Gb, \Pb) = \lambda\froSq{\Pib_\perp \Gb \Pb^\halff}$ with the projector $\Pib_\perp := \Ib - \Dbc_0^+ \Dbc_0$, which is related to the more general projection-based regularization $H(\Gb, \Pb) = \lambda\|\Pib_\perp \Gb\|$ (for any norm $\|.\|$) proposed in \cite{Dörfler2023_certaintyEqLQR} for enforcing a certainty equivalent solution.
\begin{prop}\label{prop:projectionBasedWeightedQuadReg}
    Under Assumption~\ref{assum:fullRank}, the parametric effect of the projection-based quadratic regularization $H(\Gb,\Pb) = \lambda\froSq{\Pib_\perp\Gb\Pb^\halff} = \lambda\,\trace{\Pib_\perp\Gb \Pb \Gb^\top}$ is given by
    \begin{equation*}
        H^\ast(\Kb, \Acl, \Pb) \!
        = \frac{\lambda}{\ell}\froSq{{\hat\Sigma_{\Delta\Xb}^{-\halff}} (\Acl \! - \!(\Als + \Bls \Kb))\Pb^\halff}.  
    \end{equation*}
\end{prop}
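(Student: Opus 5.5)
The plan is to follow the proof of Proposition~\ref{prop:weightedQuadReg} closely. First apply the change of variables $\widetilde\Gb := \Gb\Pb^\halff$ and split into columns $\tilde\gb_i$ and $\pb_i$. Since $\Pib_\perp$ acts from the left, the objective $\lambda\froSq{\Pib_\perp\widetilde\Gb}$ equals $\lambda\sum_i\|\Pib_\perp\tilde\gb_i\|_2^2$. The problem \eqref{eq:OP_parametricEffect} therefore decouples into $n$ independent problems: minimize $\|\Pib_\perp\tilde\gb_i\|_2^2$ subject to $\Dbc_0\tilde\gb_i = (\pb_i^\top,\, (\Kb\pb_i)^\top)^\top$ and $\Xb_1\tilde\gb_i = \Acl\pb_i$.

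Since the cited trajectory-specific result is not directly available for the projected regularizer, I would solve each column problem by hand. Write $\tilde\gb_i = \Pib\tilde\gb_i + \hb_i$ with $\Pib := \Dbc_0^+\Dbc_0$ and $\hb_i := \Pib_\perp\tilde\gb_i \in \ker\Dbc_0$.

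The first constraint pins down the row-space component: $\Pib\tilde\gb_i = \Dbc_0^+(\pb_i^\top, (\Kb\pb_i)^\top)^\top$, and this component contributes nothing to the cost.

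For the second constraint, use $\Xb_1\Pib = \Mb_\text{LS}\Dbc_0$ and $\Xb_1\Pib_\perp = \Xb_1 - \Mb_\text{LS}\Dbc_0 = \Delta\Xb$, where $\Mb_\text{LS}$ is as in Lemma~\ref{lem:covarImplicitConstraint}. The constraint then becomes
\begin{equation*}
\Delta\Xb\,\hb_i = \rb_i := \bigl(\Acl - (\Als+\Bls\Kb)\bigr)\pb_i .
\end{equation*}
Each column problem thus reduces to minimizing $\|\hb_i\|_2^2$ subject to $\Delta\Xb\,\hb_i = \rb_i$ and $\hb_i\in\ker\Dbc_0$.

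Two structural facts are needed here.
\begin{itemize}
\item \emph{Orthogonality.} $\Delta\Xb\,\Dbc_0^\top = \zerob$, because the least-squares residual is orthogonal to the row space of $\Dbc_0$. Consequently the unconstrained minimum-norm solution $\hb_i = \Delta\Xb^\top(\Delta\Xb\Delta\Xb^\top)^{-1}\rb_i$ lies in $\ker\Dbc_0$ automatically, so the constraint $\hb_i\in\ker\Dbc_0$ is inactive.
\item \emph{Full row rank of $\Delta\Xb$.} I expect this to be the main point requiring care. Under Assumption~\ref{assum:fullRank}, $\Dbc$ has full row rank. Row-reducing $\Dbc$ by subtracting $\Mb_\text{LS}\Dbc_0$ from its last block gives $\bigl(\Dbc_0^\top\; \Delta\Xb^\top\bigr)^\top$, which has the same rank. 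Combined with the orthogonality $\Delta\Xb\,\Dbc_0^\top=\zerob$, this forces $\rank{\Delta\Xb} = n$, so $\hat\Sigma_{\Delta\Xb}$ is invertible.
\end{itemize}

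With these facts the optimal value of column problem $i$ is
\begin{equation*}
\rb_i^\top(\Delta\Xb\Delta\Xb^\top)^{-1}\rb_i = \tfrac{1}{\ell}\,\|\rb_i\|^2_{\hat\Sigma_{\Delta\Xb}^{-1}} .
\end{equation*}
Summing over $i$, multiplying by $\lambda$, and reassembling the columns $\rb_i$ into $\bigl(\Acl-(\Als+\Bls\Kb)\bigr)\Pb^\halff$ inside a Frobenius norm yields the claimed expression.

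Compared with Proposition~\ref{prop:weightedQuadReg}, the $\Kb-\Kls$ and $\hat\Sigma_{\Xb_0}$ terms are absent. The reason is that the row-space component $\Pib\tilde\gb_i$, which carried those costs there, is now free.
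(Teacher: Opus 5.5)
Your proof is correct and has the same structure as the paper's: the change of variables $\widetilde\Gb = \Gb\Pb^\halff$ followed by column-wise decoupling, exactly as in Proposition~\ref{prop:weightedQuadReg}. The only difference is that the paper cites the trajectory-specific effect of $h(\ab)=\|\Pib_\perp\ab\|_2^2$ from \cite[Eq.~(15)]{Klaedtke2025_AT}, while you derive each column-wise minimum yourself from the row-space/kernel split of $\tilde\gb_i$, the orthogonality $\Delta\Xb\,\Dbc_0^\top=\zerob$, and the full row rank of $\Delta\Xb$ implied by Assumption~\ref{assum:fullRank}, which makes your argument self-contained.
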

\begin{proof}
    The proof is analogous to Proposition~\ref{prop:weightedQuadReg}
        by using
        $\Pib_\perp \Gb \Pb^\halff = 
        \begin{pmatrix}
            \Pib_\perp \tilde\gb_1 & \hdots & \Pib_\perp \tilde\gb_n
        \end{pmatrix}$,
    and the trajectory-specific effect of $h(\ab) = \|\Pib_\perp\ab\|_2^2$ from \cite[Eq.~(15)]{Klaedtke2025_AT}. 
\end{proof}
The interpretations of this cost term are equivalent to those discussed below Proposition~\ref{prop:weightedQuadReg}. Furthermore, for $\nicefrac{\lambda}{\ell} \to \infty$, the solution of \eqref{eq:lqrOptDD} tends towards that of a certainty-equivalent formulation \cite{Dörfler2023_certaintyEqLQR}, where $\Acl = \Als + \Bls \Kb$ is enforced as a hard constraint.
\begin{rem}
        While not explicitly included in the notation of $\|\Pib_\perp \Gb\|$, the convex reformulation considered in \cite[Sec.~III.C]{Dörfler2023_certaintyEqLQR} also makes use of $\Pb$. More specifically, the orthogonality constraint $\Pib_\perp\Gb = \zerob$ is first equivalently formulated as $\Pib_\perp\Gb\Pb = \zerob$ before being lifted to the objective function. Therefore, the associated regularization is $H(\Gb, \Pb) = \lambda\|\Pib_\perp \Gb \Pb\|$ instead of $H(\Gb) = \lambda\|\Pib_\perp \Gb\|$.
    \end{rem}

We next characterize the quadratic regularization 
\begin{equation}
    H(\Vb, \Pb) = \trace{\Vb \Pb \Vb^\top \hat\Sigma_{\Dbc_0}} = \lambda \froSq{\hat\Sigma_{\Dbc_0}^\halff \Vb \Pb^\halff} \label{eq:covarReg}
\end{equation} proposed in \cite{Zhao2025_regularization} for the covariance parametrization \eqref{eq:ddPredCLcovar}.
\begin{prop} \label{prop:covarReg}
    Consider the covariance parameterization \eqref{eq:ddPredCLcovar}     replacing \eqref{eq:ddPredCL} in Definition~\ref{def:parametricEffect}, and assume $\Dbc_0$ satisfies \eqref{eq:GPE}. Then,     the parametric effect of quadratic regularization \eqref{eq:covarReg} is given by
    \begin{align*}
        H^\ast(\Kb, \Pb) 
        =& \lambda \froSq{\hat\Sigma_{\Delta\Ub}^{-\halff} (\Kb-\Kls)\Pb^\halff}+ \lambda \froSq{\hat\Sigma_{\Xb_0}^{-\halff}\Pb^\halff}.
    \end{align*}
\end{prop}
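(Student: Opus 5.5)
Unlike Proposition~\ref{prop:weightedQuadReg}, here the inner problem \eqref{eq:OP_parametricEffect} involves no genuine minimization. The plan is to use Lemma~\ref{lem:covarImplicitConstraint} to pin down $\Vb$ explicitly and then simplify the resulting quadratic form with a block (Schur complement) inversion of $\hat\Sigma_{\Dbc_0}$.

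\textbf{Step 1: eliminate $\Vb$.} Under \eqref{eq:GPE}, $\hat\Sigma_{\Dbc_0}=\ell^{-1}\Dbc_0\Dbc_0^\top$ is nonsingular. As in the proof of Lemma~\ref{lem:covarImplicitConstraint}, the first two block rows of \eqref{eq:ddPredCLcovar} force
$$\Vb=\hat\Sigma_{\Dbc_0}^{-1}\begin{pmatrix}\Ib & \Kb^\top\end{pmatrix}^\top .$$
The third block row is then feasible exactly when $\Acl=\Als+\Bls\Kb$ (otherwise the value is $+\infty$). This is why $H^\ast$ depends only on $(\Kb,\Pb)$, with $\Acl$ implicitly fixed. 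Substituting this $\Vb$ into \eqref{eq:covarReg}, and reading the regularizer as carrying the factor $\lambda$ as in its Frobenius form, gives
$$H^\ast(\Kb,\Pb)=\lambda\,\trace{\Pb^\halff \begin{pmatrix}\Ib\\ \Kb\end{pmatrix}^{\!\top}\hat\Sigma_{\Dbc_0}^{-1}\begin{pmatrix}\Ib\\ \Kb\end{pmatrix}\Pb^\halff}.$$

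\textbf{Step 2: block inversion.} Partition
$$\hat\Sigma_{\Dbc_0}=\begin{pmatrix}\hat\Sigma_{\Xb_0} & \Sb_{xu}\\ \Sb_{xu}^\top & \Sb_{uu}\end{pmatrix},\qquad \Sb_{xu}=\ell^{-1}\Xb_0\Ub_0^\top,\quad \Sb_{uu}=\ell^{-1}\Ub_0\Ub_0^\top .$$
Here $\hat\Sigma_{\Xb_0}$ is invertible because $\Xb_0$ has full row rank by \eqref{eq:GPE}. Write $\Sb$ for the Schur complement $\Sb_{uu}-\Sb_{xu}^\top\hat\Sigma_{\Xb_0}^{-1}\Sb_{xu}$. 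The standard completion-of-squares identity for the block inverse then yields
$$\begin{pmatrix}\Ib\\ \Kb\end{pmatrix}^{\!\top}\hat\Sigma_{\Dbc_0}^{-1}\begin{pmatrix}\Ib\\ \Kb\end{pmatrix}=\hat\Sigma_{\Xb_0}^{-1}+(\Kb-\Sb_{xu}^\top\hat\Sigma_{\Xb_0}^{-1})^\top \Sb^{-1}(\Kb-\Sb_{xu}^\top\hat\Sigma_{\Xb_0}^{-1}).$$

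\textbf{Step 3: identify the terms.} Because $\Xb_0$ has full row rank, $\Xb_0^+=\Xb_0^\top(\Xb_0\Xb_0^\top)^{-1}$. Hence $\Kls=\Ub_0\Xb_0^+=\Sb_{xu}^\top\hat\Sigma_{\Xb_0}^{-1}$. Expanding $\Delta\Ub\Delta\Ub^\top$ with $\Delta\Ub=\Ub_0-\Kls\Xb_0$, and using that $\Xb_0^+\Xb_0$ is an orthogonal projector, gives
$$\ell^{-1}\Delta\Ub\Delta\Ub^\top=\Sb_{uu}-\Sb_{xu}^\top\hat\Sigma_{\Xb_0}^{-1}\Sb_{xu}=\Sb ,$$
that is, $\Sb=\hat\Sigma_{\Delta\Ub}$. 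Plugging these identifications into the trace from Step 1 and rewriting $\trace{\Pb^\halff \Mb^\top \Mb \Pb^\halff}$ as $\froSq{\Mb\Pb^\halff}$ produces the two claimed Frobenius terms.

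The only step needing care is this identification of the Schur complement with the residual covariance. It also implicitly requires $\hat\Sigma_{\Delta\Ub}$ to be invertible; this follows from \eqref{eq:GPE} via the positive definiteness of the Schur complement of $\hat\Sigma_{\Dbc_0}\succ \zerob$. The remaining steps are routine algebra.
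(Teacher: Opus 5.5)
Your proposal is correct and follows the paper's proof essentially step for step: you fix $\Vb=\hat\Sigma_{\Dbc_0}^{-1}\begin{pmatrix}\Ib & \Kb^\top\end{pmatrix}^\top$, substitute it into the trace, and simplify via the block-triangular factorization of $\hat\Sigma_{\Dbc_0}^{-1}$ with off-diagonal block $-\Kls$ and diagonal blocks $\hat\Sigma_{\Xb_0}^{-1}$, $\hat\Sigma_{\Delta\Ub}^{-1}$, finishing with a cyclic trace. The only difference is that you carry out the verification the paper leaves as ``one may verify'' (namely $\Kls=\ell^{-1}\Ub_0\Xb_0^\top\hat\Sigma_{\Xb_0}^{-1}$, and the Schur complement equals $\hat\Sigma_{\Delta\Ub}$, which is invertible by positive definiteness), and you correctly flag the missing $\lambda$ in the trace form of \eqref{eq:covarReg}.
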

\begin{proof}
    Since $\Vb = \hat\Sigma_{\Dbc_0}^{-1}\begin{pmatrix}
        \Ib & \Kb^\top
    \end{pmatrix}^\top$ is fixed by a given $\Kb$, solving \eqref{eq:OP_parametricEffect} with the covariance parameterization \eqref{eq:ddPredCLcovar} replacing \eqref{eq:ddPredCL} involves no optimization. Instead, we evaluate
    $$
        \trace{\Vb\Pb\Vb^\top\hat\Sigma_{\Dbc_0}} 
        = \mathrm{tr}\left(\hat\Sigma_{\Dbc_0}^{-1}\begin{pmatrix}
            \Ib \\ \Kb
        \end{pmatrix}
        \Pb\begin{pmatrix}
            \Ib \\ \Kb
        \end{pmatrix}^\top\right),
    $$
    where the r.h.s. reformulation was also done in \cite[Eq.~(28)]{Zhao2025_regularization}. 
    One may verify that the involved inverse admits the block form
    $$
    \hat\Sigma_{\Dbc_0}^{-1} = 
        \begin{pmatrix}
            \Ib & \zerob \\
            -\Kls & \Ib
        \end{pmatrix}^\top \begin{pmatrix}
            \hat\Sigma_{\Xb_0}^{-1} & \zerob \\
            \zerob & \hat\Sigma_{\Delta\Ub}^{-1}
        \end{pmatrix}\begin{pmatrix}
            \Ib & \zerob \\
            -\Kls & \Ib
        \end{pmatrix},
    $$
    and we use this expression to compute 
    $$
        \begin{pmatrix}
        \Ib \\ \Kb
    \end{pmatrix}^\top \hat\Sigma_{\Dbc_0}^{-1} \begin{pmatrix}
        \Ib \\ \Kb
    \end{pmatrix} = \hat\Sigma_{\Xb_0}^{-1} + (\Kb - \Kls)^\top \hat\Sigma_{\Delta \ub}^{-1} (\Kb - \Kls).
    $$
    Thus, cyclically rotating the trace yields
    \begin{align*}
        H^\ast(\Kb, \Pb) &= \lambda\, \mathrm{tr}\left(\begin{pmatrix}
        \Ib \\ \Kb
    \end{pmatrix}^\top\hat\Sigma_{\Dbc_0}^{-1}\begin{pmatrix}
        \Ib \\ \Kb
    \end{pmatrix}
    \Pb\right) \\
    &= \lambda\, \trace{\hat\Sigma_{\Xb_0}^{-1} \Pb + (\Kb - \Kls)^\top \hat\Sigma_{\Delta \ub}^{-1} (\Kb - \Kls)\Pb} \\
    &= \lambda \froSq{\hat\Sigma_{\Delta\Ub}^{-\halff} (\Kb-\Kls)\Pb^\halff}+ \lambda \froSq{\hat\Sigma_{\Xb_0}^{-\halff}\Pb^\halff}, 
    \end{align*} 
    which completes the proof. 
\end{proof}
This expression mirrors \eqref{eq:weightedQuadReg} with two distinctions: (i) the first cost term of \eqref{eq:weightedQuadReg} is absent since $\Acl=\Als+\Bls\Kb$ is enforced by the covariance parameterization, and (ii) the terms are not scaled by $\ell^{-1}$ due to the scaling in \eqref{eq:ddPredCLcovar}. This avoids performance degradation as $\ell$ increases \cite[Thm.3]{Zeng2024}. A similar effect occurs in DPC \cite{Mattsson2024} and can be mitigated by scaling $\lambda$ with $\ell$ or embedding this scaling in the parameterization, as in \eqref{eq:ddPredCLcovar}. 
\begin{rem}
    Proposition~\ref{prop:weightedQuadReg}, \ref{prop:projectionBasedWeightedQuadReg}, and~\ref{prop:covarReg} are also valid for $\Pb$ being replaced by an arbitrary symmetric positive definite weight matrix $\Sb\in\R^{n\times n}$.  However, $\Sb = \Pb$ is usually considered, since it facilitates convex reformulations of \eqref{eq:lqrOptDD}.
\end{rem}

\subsection{Computational benefits and mixed regularizations} \label{sec:compBenefits}

As mentioned below Definition~\ref{def:parametricEffect}, the parametric effect $H^\ast(\Kb, \Acl, \Pb)$ can be used in place of $H(\Gb, \Pb)$. Further, under Assumption~\ref{assum:fullRank}, the constraint \eqref{eq:ddPredCL} and $\Gb$ can be fully eliminated. The resulting optimization problem reads as
\begin{align}
    \min_{\Pb \succeq \Ib, \Kb, \Acl} &\trace{\Qb\Pb} + \trace{\Kb^\top \Rb \Kb \Pb} + H^\ast(\Kb, \Acl, \Pb) \label{eq:lqrOptParamEffect}\\
    &\text{s.t.} \quad \Acl\Pb\Acl^\top - \Pb + \Ib \preceq \zerob, \nonumber
\end{align}
and we briefly demonstrate a convex reformulation for the parametric effect of $H(\Gb, \Pb) = \lambda\,\trace{\Gb\Pb\Gb^\top}$. Using \eqref{eq:weightedQuadReg} and the change of variables $\tKb := \Kb \Pb$ and $\tAcl:=\Acl \Pb$,  a convex reformulation of \eqref{eq:lqrOptParamEffect} is given by
\footnotesize
\begin{align}
    &  \min_{\Pb, \tilde{\Kb}, \tilde{\Ab}_\text{cl}, \Lb, \Mb, \Nb} \trace{\Qb\Pb} + \trace{\Rb \Lb} \label{eq:convexReformulation}\\& \qquad \qquad \qquad  +\frac{\lambda_1}{\ell}\trace{\Mb \hat\Sigma_{\Delta\Xb}} + \frac{\lambda_2}{\ell}\trace{\Nb \hat\Sigma_{\Delta\Ub}} + \frac{\lambda_3}{\ell}\trace{\hat\Sigma_{\Xb_0}\Pb} \nonumber \\
    & \quad\text{s.t.} \begin{pmatrix}
        \Pb-\Ib & \tAcl \\
        \tAcl^\top & \Pb
    \end{pmatrix} \succeq \zerob,\quad \Pb \succeq \Ib,  \quad  \nonumber\\
     & \qquad\,\begin{pmatrix}
        \Lb & \tKb \\
        \tKb^\top & \Pb
    \end{pmatrix} \succeq \zerob, \quad\begin{pmatrix}
        \Nb & \tKb - \Kls\Pb \\
        \left(\tKb - \Kls\Pb\right)^\top & \Pb
    \end{pmatrix} \succeq \zerob, \nonumber\\ & \qquad\,\begin{pmatrix}
        \Mb & \tAcl - \left(\Als \Pb + \Bls \tKb\right) \\
        \left(\tAcl-\left(\Als \Pb + \Bls \tKb\right)\right)^\top \!\!\!\!\!& \Pb
    \end{pmatrix} \succeq \zerob \nonumber 
\end{align}
\normalsize
with $\lambda = \lambda_1 = \lambda_2 = \lambda_3$, and symmetric positive definite $\Lb\in \R^{m\times m}, \Mb\in \R^{n\times n}, \Nb\in \R^{m\times m}$. The reformulation is stated without proof, as it follows standard procedures also used in \cite{DePersis2021, Dörfler2023_certaintyEqLQR, Zhao2025_regularization}. While introducing additional optimization variables compared to \cite{DePersis2021}, the size of \eqref{eq:convexReformulation} does not scale with the number of data columns $\ell$. Furthermore, seeing as the parametric effect \eqref{eq:weightedQuadReg} consists of three parts with drastically different interpretations, the reformulation may be used to weight them individually by choosing separate non-negative $\lambda_1, \lambda_2, \lambda_3$ for each term. This idea also appears in mixed regularization experiments \cite[Sec~V.B]{Dörfler2022_roleOfRegLQR} and in DPC \cite{Breschi2022new, Klaedtke2025_AT}. In the same vein, a convex reformulation for the projection-based variation in Proposition~\ref{prop:projectionBasedWeightedQuadReg} can be obtained from \eqref{eq:convexReformulation} by choosing $\lambda = \lambda_1$, $\lambda_2 = \lambda_3 = 0$, and dropping the variable $\Nb$ along with its constraint. Finally, the same ideas can be applied to the covariance parameterization and its regularization, resulting in the convex reformulation
\footnotesize
\begin{align}
    &  \min_{\Pb, \tilde\Kb, \Lb, \Nb} \trace{\Qb\Pb} + \trace{\Rb \Lb}  + \lambda_2\trace{\Nb \hat\Sigma_{\Delta\Ub}} + \lambda_3\trace{\hat\Sigma_{\Xb_0}\Pb} \label{eq:convexReformulationCovar}\\
    & \quad\text{s.t.} \begin{pmatrix}
        \Pb-\Ib & \Als \Pb + \Bls \tKb \\
        \left(\Als \Pb + \Bls \tKb\right)^\top & \Pb
    \end{pmatrix} \succeq \zerob,\quad \Pb \succeq \Ib,  \quad  \nonumber\\
     & \qquad\,\begin{pmatrix}
        \Lb & \tKb \\
        \tKb^\top & \Pb
    \end{pmatrix} \succeq \zerob, \quad\begin{pmatrix}
        \Nb & \tKb - \Kls\Pb \\
        \left(\tKb - \Kls\Pb\right)^\top & \Pb
    \end{pmatrix} \succeq \zerob \nonumber
\end{align}
\normalsize
with $\tKb = \Kb \Pb$, $\lambda = \lambda_2 = \lambda_3$, and symmetric positive definite $\Lb\in \R^{m\times m}, \Nb\in \R^{m\times m}$. While the computational benefit is limited, since the covariance parameterization already does not scale with $\ell$, this reformulation may also be used to choose $\lambda_2, \lambda_3$ individually, if desired.

\section{Numerical experiment}\label{sec:experiments}

As we have shown, the parametric effects of the considered regularizations all involve (subsets of) the terms
\begin{align*}
    H_1^\ast(\Kb, \Acl, \Pb)  &=  \froSq{{\hat\Sigma_{\Delta\Xb}^{-\halff}} (\Acl-(\Als + \Bls \Kb))\Pb^\halff}, \\
     H_2^\ast(\Kb, \Pb) &=\froSq{{\hat\Sigma_{\Delta\Ub}^{-\halff}} (\Kb-\Kls)\Pb^\halff}, \\
        H_3^\ast( \Pb) &=  \froSq{{\hat\Sigma_{\Xb_0}^{-\halff}}\Pb^\halff}. 
\end{align*}
For notational convenience, we will drop the argument when referring to individual $H_i^\ast$. Also, we assume that regularization weights have been appropriately scaled with $\ell$ to allow comparison between the two parameterizations \eqref{eq:ddPredCL} and \eqref{eq:ddPredCLcovar}. In this section, we conduct a numerical experiment specifically designed to isolate (combinations of) terms  $H_i^\ast$, allowing us to examine their influence on the resulting data-driven LQR solution. For easier notation, we associate a set $\Sc$ with $H^\ast = \sum_{i\in \Sc} \lambda H_i^\ast$. For instance, the cases $\Sc = \{1, 2, 3\}$ and $\Sc = \{1\}$ correspond to the parametric effects in Proposition~\ref{prop:weightedQuadReg} and~\ref{prop:projectionBasedWeightedQuadReg}, respectively. These different cases $\Sc$ are implemented via \eqref{eq:convexReformulation} or \eqref{eq:convexReformulationCovar} by setting $\lambda_i = \lambda$ for $i\in \Sc$ and $\lambda_j = 0$ for $j\notin \Sc$ (and removing irrelevant variables and constraints). For visualization purposes, we consider a low-dimensional LTI system \eqref{eq:LTI} with
$$
    \Ab = \begin{pmatrix}
        0.525 & -0.325 \\
        -0.325 & 0.525
    \end{pmatrix}, \quad \Bb = \begin{pmatrix}
        1 \\ 0
    \end{pmatrix},
$$
and i.i.d. disturbance $\wb(k)$ drawn from $\Nc(\zerob, 0.01\cdot \Ib)$. Furthermore, we choose $\Qb = \Ib$, $\Rb = 0.1$, and $\ell = 30$. To visualize the directional aspects of $H_2^\ast$ and $H_3^\ast$, we explore the input-state space in a very specific manner. We choose $\Xb_0 = 10 \cdot \oneb^\top \vb + \Xb_\text{rnd}$ and $\Ub_0 = \Kb_\text{expl} \Xb_0 + \Ub_\text{rnd}$ with $\vb = \begin{pmatrix}
    -1 & 1
\end{pmatrix}^\top$ and the (barely stabilizing) exploration controller $\Kb_\text{expl}:= \begin{pmatrix}
    -2.8 & \,6.8
\end{pmatrix}$. The columns of the excitation components $\Xb_\text{rnd}$ and $\Ub_\text{rnd}$ are drawn i.i.d. from $\Nc(0, \Ib)$ and $\Nc(0, 1)$, respectively. Due to their comparatively low variance, exploration is good near $\vb$ and near $\Kb_\text{expl}$, but poor in other regions. This will be reflected by $\Kls \approx \Kb_\text{expl}$ and by the eigenpairs of $\hat\Sigma_{\Xb_0}$ and $\hat\Sigma_{\Delta\Ub}$. Due to space restrictions, we only demonstrate an experiment outcome in which $\Kls$ stabilizes $(\Als, \Bls)$. SDPs are solved using \cite{cvx} with SDPT3 4.0 \cite{SDPT3}. The code used to generate these figures is available online\footnote{\scriptsize{\href{https://www.github.com/Control-and-Cyberphysical-Systems/ddLQR-quadReg}{github.com/Control-and-Cyberphysical-Systems/ddLQR-quadReg}}}. 

First, we focus on the term $H_1^\ast$, which pushes $\Acl$ towards $\Als + \Bls\Kb$. To see how it is traded off with the other terms, we examine all combinations of the terms $H_i^\ast$ that involve $H_1^\ast$. That is, we implement the cases $\{1, 2, 3\}, \{1\}, \{1, 2\}$, and $\{1, 3\}$ via \eqref{eq:convexReformulation}. For comparison with \{1, 2, 3\} and \{1\}, we further implement $H(\Gb, \Pb) = \lambda\,\trace{ \Gb \Pb \Gb^\top}$ and $H(\Gb, \Pb) = \lambda\,\trace{\Pib_\perp \Gb \Pb \Gb^\top}$.  Figure~\ref{fig:Experiment1} shows the deviation of the synthesized closed-loop matrix $\fro{\overline{\Ab}_\text{cl}-(\Als+\Bls\overline{\Kb})}$ for each case over a range of $\lambda\in[10^{-4}, 10^{6}]$. The results indeed confirm Propositions~\ref{prop:weightedQuadReg} and~\ref{prop:projectionBasedWeightedQuadReg}; the discrepancies (blue) observed in Figure~\ref{fig:Experiment1} coincide with numerical issues reported by the solver. Even though the controllers $\overline\Kb$ converge to very different values, we can see similar behavior of $\overline{\Ab}_\text{cl}$ for the pairs $\{\{1\}, \{1, 2\}\}$ and $ \{\{1,3\}, \{1, 2, 3\}\}$, respectively. In the former, $\overline{\Ab}_\text{cl}$ indeed tends (non-monotonically) towards $\Als+\Bls\overline{\Kb}$ for increasing $\lambda$. In the latter, the presence of $H_3^\ast$ seems to prevent this behavior. 
\begin{figure}
    \centering
    \includegraphics[trim=4cm 11.91cm 5cm 11.99cm,clip=true, width=\linewidth]{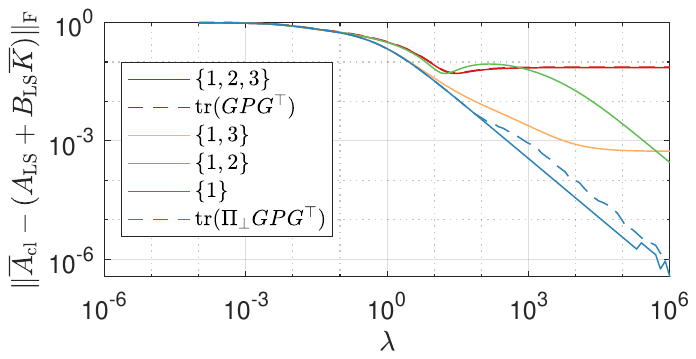}\vspace{-3mm}
    \caption{Effect of $H_1^\ast$ for different regularizations shown via the deviation between $\overline{\Ab}_\text{cl}$ and $\Als + \Bls \overline\Kb$ for $\lambda \in[10^{-6},10^{6} ]$.} 
    \label{fig:Experiment1}
\end{figure}

Next, we study $H_2^\ast$, which pushes $\Kb$ towards $\Kls$. Recall that $\Kls$ is a least-squares estimate based on $\Dbc_0$ (see \eqref{eq:K_LS}), and reflects the best-explored controller. We use the covariance parameterization \eqref{eq:ddPredCLcovar} since it ``removes'' $H_1^\ast$ by implicitly enforcing $\Acl = \Als + \Bls \Kb$, and implement the cases $\{2, 3\}$ and $\{2\}$ via \eqref{eq:convexReformulationCovar}. For comparison with \{2, 3\}, we also implement $H(\Vb, \Pb) = \trace{\Vb\Pb\Vb^\top \hat\Sigma_{\Dbc_0}}$. Figure~\ref{fig:Experiment2} shows the respective optimal gains $\overline{\Kb}$ parameterized by $\lambda\in[0, 10^{10}]$. The results indeed confirm Proposition~\ref{prop:covarReg}. All cases start at the certainty equivalent LQR gain $\Kb_\text{CE}$ \cite{Dörfler2023_certaintyEqLQR} for $\lambda = 0$, and move towards the best-explored controller $\Kls$ for increasing $\lambda$. However, $H_3^\ast$ seems to prevent the case $\{2, 3\}$ from fully converging to $\Kls$, which only occurs in $\{2\}$.
\begin{figure}
    \centering
    \includegraphics[trim=4cm 11.85cm 5cm 12.2cm,clip=true, width=\linewidth]{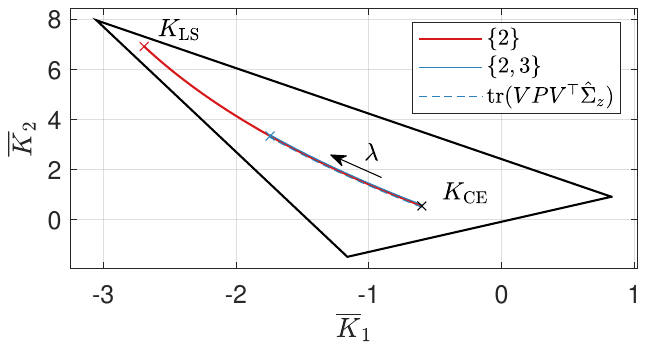}\vspace{-3mm}
    \caption{Effect of $H_2^\ast$ for different regularizations shown via the optimal gains $\overline{\Kb}$ parameterized by $\lambda \in [0, 10^{10}]$. The set of stabilizing $\Kb$ for the estimated $(\Als, \Bls)$ is shown in black. } 
    \label{fig:Experiment2}
\end{figure}

Next, we focus on the term $H_3^\ast$, which penalizes poorly explored regions of the state-space. We again ``remove'' the effect of $H_1^\ast$ by considering the covariance parameterization \eqref{eq:ddPredCLcovar}. Further, we fully isolate $H_3^\ast$ by only considering the case \{3\}, implemented via \eqref{eq:convexReformulationCovar}. Figure~\ref{fig:Experiment3} shows phase portraits and example trajectories of the synthesized closed-loop matrix $\overline{\Ab}_\text{cl}$. On their way to the origin, states are increasingly driven toward the best explored direction $\vb = \begin{pmatrix}
    -1 & 1
\end{pmatrix}^\top$ as $\lambda$ increases.
\begin{figure}
    \centering
    \includegraphics[trim=6cm 10.9cm 6cm 10.5cm,clip=true, width=\linewidth]{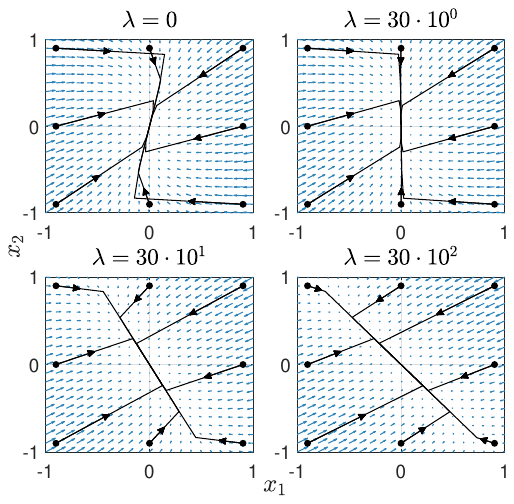} \vspace{-8mm}
    \caption{Effect of $H_3^\ast$ for the case $\{3\}$ shown via phase portraits and (disturbance-free) example trajectories of $\xb(k+1) = \overline{\Ab}_\text{cl}\xb(k)$ .}
    \vspace{-2mm}
    \label{fig:Experiment3}
\end{figure}
Finally, Tab.~\ref{tab:compTimes} reports computation times for various $\ell$, including the preprocessing required for the matrices in \eqref{eq:convexReformulation}, illustrating the computational advantages of \eqref{eq:convexReformulation}.
\begin{table}[]
\centering
\caption{Mean computation times over 10 runs on an Intel Core i7.}
\begin{tabular}{crrrr}\toprule
                                    & $\ell = 30$ & $\ell = 60$ & $\ell = 90$ & $\ell = 120$ \\\midrule
$\trace{\Gb\Pb\Gb}$            & \SI{1.31}{\s}        & \SI{2.60}{\s}        & \SI{11.97}{\s}        & \SI{44.26}{\s}        \\
$\{1, 2, 3\}$                       & \SI{0.51}{\s}        & \SI{0.51}{\s}        & \SI{0.47}{\s}        & \SI{0.49}{\s}         \\\midrule
$\trace{\Pib_\perp \Gb\Pb\Gb^\top}$ & \SI{1.66}{\s}        & \SI{4.57}{\s}        & \SI{22.59}{\s}        & \SI{62.28}{\s}        \\
$\{1\}$                             & \SI{0.52}{\s}        & \SI{0.50}{\s}        & \SI{0.46}{\s}         & \SI{0.48}{\s}        \\\bottomrule
\end{tabular}\label{tab:compTimes}
\end{table}

\section{Summary and Outlook}\label{sec:Conclusion}

We introduced the concept of the \textit{parametric effect of regularization} to explain how quadratic regularization influences direct data-driven LQR solutions. This concept translates regularization costs from auxiliary variables to system quantities, thereby enhancing interpretability and providing clearer insights into regularization mechanisms. Our analysis showed that different quadratic regularizations systematically push the synthesized closed-loop parameters toward least-squares estimates, weighted by empirical covariances of the involved data. Beyond improved explainability, we demonstrated that this idea enables the elimination of an auxiliary variable, removing the dependence of problem size on the number of data samples. Numerical experiments confirmed the distinct roles of individual parametric terms in shaping  synthesized closed-loop dynamics and controller gains. Future work may further investigate mixed regularization weighting strategies and their connection to established stability and performance bounds for the un-mixed case. Moreover, exploring further synergies with explainability results from the closely related DPC framework appears promising for advancing data-driven LQR design.

\end{document}